\newtheorem{proposition}{Proposition}
\newtheorem{theorem}{Theorem}
\newtheorem{corollary}{Corollary}
\theoremstyle{definition}
\newtheorem{definition}{Definition}
\newtheorem{example}{Example}
\newtheorem{remark}{Remark}
\newcommand{\ave}[1]{\langle #1 \rangle}
\newcommand{\bra}[1]{\langle #1|}
\newcommand{\ket}[1]{| #1 \rangle }
\newcommand{\be}{\begin{eqnarray}}
\newcommand{\ee}{\end{eqnarray}}
\begin{document}

\title{Spectral properties of reduced fermionic density operators \\ and parity superselection rule}

\author{Grigori G. Amosov}

\email{gramos@mi.ras.ru}

\affiliation{V.~A.~Steklov Mathematical Institute, Russian Academy
of Sciences, Gubkina Str. 8, Moscow, 119991, Russia}

\author{Sergey N. Filippov}

\email{sergey.filippov@phystech.edu}

\affiliation{Institute of Physics and Technology, Russian Academy
of Sciences, Nakhimovskii Pr. 34, Moscow 117218, Russia}

\affiliation{Moscow Institute of Physics and Technology,
Institutskii Per. 9, Dolgoprudny, Moscow Region 141700, Russia}

\affiliation{P.~N.~Lebedev Physical Institute, Russian Academy of
Sciences, Leninskii Pr. 53, Moscow 119991, Russia}

\begin{abstract}
We consider pure fermionic states with a varying number of
quasiparticles and analyze two types of reduced density operators:
one is obtained via tracing out modes, the other is obtained via
tracing out particles. We demonstrate that spectra of mode-reduced
states are not identical in general and fully characterize pure
states with equispectral mode-reduced states. Such states are
related via local unitary operations with states satisfying the
parity superselection rule. Thus, valid purifications for
fermionic density operators are found. To get particle-reduced
operators for a general system, we introduce the operation
$\Phi(\varrho) = \sum_i a_i \varrho a_i^{\dag}$. We conjecture
that spectra of $\Phi^p(\varrho)$ and conventional $p$-particle
reduced density matrix $\varrho_p$ coincide. Nontrivial
generalized Pauli constraints are derived for states satisfying
the parity superselection rule.
\end{abstract}

\keywords{Fermionic state \and Reduced density matrix \and Tracing
out modes \and Tracing out particles \and Spectrum \and
Equispectrality \and Superselection rule \and Generalized Pauli
constraints}

\pacs{03.65.Aa, 03.65.Fd, 03.67.Bg, 05.30.Fk}

%03.65.Aa    Quantum systems with finite Hilbert space
%03.65.Fd    Algebraic methods
%05.30.Fk    Fermion systems and electron gas
%03.67.Bg    Entanglement production and manipulation

\maketitle

%%%%%%%%%%%%%%%%%%%%%%%%%%%%%%%%%%%%%%%%%%%%%%%%%%%%%%%%%%%%%%%%%%%%%%%%%%%%%%%%%%%%

\section{\label{section-introduction} Introduction}

The physical nature of information~\cite{landauer} not only
provides new ways of information processing (like quantum
computing~\cite{nielsen-chuang}) but also stimulates the research
of specific information carriers --- particles and quasiparticles
with peculiar properties. This paper deals with particular quantum
states which represent a superposition of states with different
numbers of fermionic quasiparticles.

After electrons were argued to possess the additional degree of
freedom~\cite{pauli-1924} (later called
spin~\cite{uhlenbeck-1925}) and obey the Pauli exclusion
principle~\cite{pauli-1925}, the Fermi--Dirac statistics was
found~\cite{fermi,dirac} and the general notion of fermion
particles emerged. The antisymmetric nature of fermionic
wavefunction~\cite{dirac} has provoked development of the
canonical anticommutation relation~\cite{jordan-wigner} and the
general theory of second quantization~\cite{fock}, which deals
with systems with a varying number of particles. The fermionic
statistics of half-integer spin particles was established
in~\cite{fierz,pauli-1940}. Since every fermion has a half-integer
spin, the total spin of an even number of fermions must be
integer, whereas that of an odd number of fermions must be
half-integer. Given a system with a varying number of particles,
it is therefore possible to have a superposition of half-integer
and integer total spins. However, it was recognized later that the
comparison of phases between states with half-integer and integer
angular momenta cannot be reconciled with the requirement of
relativistic invariance~\cite{www}, which resulted in the
formulation of parity (spinor, univalence) superselection rule
(see, e.g., the reviews~\cite{whiteman,cisneros,bartlett}). Any
superselection rule relies on a group of physical transformations,
for example the parity superselection rule relies on the
rotational invariance~\cite{hegerfeldt} whereas the mass
superselection rule relies on the Galilean invariance of
non-relativistic quantum mechanics~\cite{bargman}. If a system of
leptons were invariant with respect to fermionic $U(1)$
transformations, there would be a leptonic family number
superselection rule, however, the recent experiments clearly
indicate the neutrino oscillations (see, e.g.,
\cite{gonzalez-garcia}), which contradicts to the conservation of
leptonic family number (the violation was predicted in
Ref.~\cite{pontecorvo}). Thus, the experimental investigations of
nature modify our understanding of fundamental symmetries, the
physical models, and the mathematical frameworks used for their
description. Recent experiments with massless Weyl fermionic
quasiparticles~\cite{xu-science,lu,xu-nat-phys} stimulate us to
investigate properties of fermionic states in coherent
superpositions of different number states. Moreover,
superpositions of different number states usually emerge in
fermionized models (see, e.g.~\cite{tsvelik}).

In this paper, we analyze properties of general pure fermionic
states with a varying number of quasiparticles from the viewpoint
of quantum information theory, namely, we consider the spectra of
two reduced operators obtained from such a state by two different
approaches: reduction of modes and reduction of particles.

Suppose a state space $H$ composed of two sets of modes $H_1$ and
$H_2$. The mode represents a single-particle state that is either
occupied or not. In quantum information theory, $H_1$ and $H_2$
may correspond to distant laboratories. In solid state physics,
$H_1$ and $H_2$ may describe different positions or momenta of
quasiparticles. In quantum chemistry, $H_1$ and $H_2$ may be
attributed to two distinct sets of spin-orbits. In this paper, we
consider situations, when the number of particles may change not
only in $H_1$ and $H_2$ but in the common space $H$ too. Suppose a
local physical observable $A_1 \in \mathcal{A}_1$ acting on modes
$H_1$ only, for instance, $A_1$ can be the local energy or the
population of modes in $H_1$. On the one side, the
quantum-mechanical mean value $\ave{A_1} = \omega(A_1) = {\rm
Tr}(\varrho A_1)$, where $\omega$ is a functional defined on the
common algebra of operators $\mathcal{A} = \mathcal{A}_1 \times
\mathcal{A}_2$ and $\varrho$ is the total density operator. On the
other side, $\ave{A_1} = \omega_1(A_1) = {\rm Tr}(\varrho_1 A_1)$,
where $\omega_1$ is a functional defined on the algebra of local
operators $\mathcal{A}_1$ and $\varrho_1$ is the mode-reduced
density operator. \footnote{It is worth mentioning, that
particular realizations of $\varrho_1$ are known in quantum
chemistry as orbital reduced density matrices. In this case, a
single orbital corresponds to the algebra $\mathcal{A}_1$
generated by operators $a_{\uparrow}a_{\uparrow}^{\dag}$,
$a_{\uparrow}$, $a_{\uparrow}^{\dag}$,
$a_{\uparrow}^{\dag}a_{\uparrow}$ and
$a_{\downarrow}a_{\downarrow}^{\dag}$, $a_{\downarrow}$,
$a_{\downarrow}^{\dag}$, $a_{\downarrow}^{\dag}a_{\downarrow}$
corresponding to electron spin projections $+\frac{1}{2}$ and
$-\frac{1}{2}$, respectively. In our notation, the 1-orbital
reduced density matrix $\varrho_1$ is a two-mode partial state as
it takes into account both spin projections. Analogously,
2-orbital reduced density matrix is a four-mode partial state.}
Analogously, $\omega_2$ and $\varrho_2$ represent a mode-reduced
state of the second set of modes. Thus, for physical systems with
two separate sets of single-particle sets, mode reduced states
naturally occur as algebraic constructions. Such algebraic
constructions are well defined for systems with a variable number
of quasiparticles, so we explore spectral properties of $\omega_1$
and $\omega_2$ in this paper.

The inverse procedure to reduction is known as a purification and
takes a simple form in tensor product spaces because the two
spectra are identical (an immediate consequence of the Schmidt
decomposition). We demonstrate that spectra of mode-reduced states
do not necessarily coincide for a general fermionic state, which
can be interpreted as a violation~\cite{moriya-2002} of the
Araki--Lieb inequality~\cite{araki-lieb}. This fact should be
taken into account when the entanglement of pure states is
quantified via the entropy of one of subsystems~\cite{Friis}. The
discrepancy between spectra (entropies) of subsystems of a pure
state seems to be unphysical (since the result depends on
labelling), so we further find necessary and sufficient conditions
for spectra to be coincident for mode-reduced states. In
particular, we not only find that spectra coincide for states
satisfying the parity superselection rule~\cite{friis-2015} but
also prove that all states with equispectral subsystems are the
univalent states subjected to local unitary transformations.

Suppose now that one is interested in an average $p$-particle
property, for instance, the average spin projection ($p=1$) or the
average two-electron Coulomb interaction ($p=2$). In this case,
one should trace out all irrelevant particles. In quantum
chemistry, a particle reduction is realized by integrating over
irrelevant particles' coordinates~\cite{carlson,coleman,ando},
which results in the so-called $p$-particle reduced density matrix
(see, e.g., \cite{mazziotti}). Such an approach provides new
inequalities for fermionic occupation numbers and leads to the
generalized Pauli
constraints~\cite{borland-dennis,Schilling,benavides-riveros,schilling-2015}.
As far as systems with a varying number of particles are
concerned, the construction of $p$-particle reduced density
matrices is non-trivial and we discuss it in this paper. Moreover,
we also consider the traced out part $\Phi^p(\varrho)$ with a
varying number of particles. Thus, we generalize
particle-reduction technique for systems with a varying number of
quasiparticles and compare the spectra of $p$-particle reduced
density matrix and $\Phi^p(\varrho)$.

As a by-product, we formalize the density matrix construction for
states with a varying number of fermions. Our description of
fermionic density operator differs from that used by Cahill and
Glauber~\cite{cahill-glauber-1999}.

The problem of our interest --- possible pure fermionic states
with equispectral reduced density operators --- is connected with
the problem of calculating von Neumann entropy for the fermionic
dynamical systems and estimating capacities of fermionic quantum
channels~\cite{Bravyi,Fannes}. One may expect that the transfer of
quantum information through a fermionic channel would be different
from that for bosons~\cite{lewenstein}, as the fermionic theory
differs from the qubit theory~\cite{dariano-ijmp-2014,zimboras} in
both tomography~\cite{dariano-2014} and quantum
computation~\cite{bravyi-kitaev}.

The paper is organized as follows. In Sec.~\ref{section-notation},
we describe the notation used and formulate the problem. In
Sec.~\ref{section-density}, the density matrix formalism is
concisely clarified. In Sec.~\ref{section-spectra}, the main
results regarding the spectra of mode-reduced states are obtained.
In Sec.~\ref{section-particle-reduction}, the construction and
spectrum of particle-reduced operators are analyzed. In
Sec.~\ref{section-conclusions}, brief conclusions are given.

%%%%%%%%%%%%%%%%%%%%%%%%%%%%%%%%%%%%%%%%%%%%%%%%%%%%%%%%%%%%%%%%%%%%%%%%%%%%%%%%%%%%%%%%%%%%%%%%%%%%

\section{\label{section-notation} Notation}

Consider the algebra of canonical anticommutation relations (CAR
algebra) $\mathcal{A}$  generated by the ladder operators $a_s$,
$a_s^{\dag}$ satisfying the relations~\cite{jordan-wigner,berezin}
\begin{eqnarray}
&& a_s a_t^{\dag} + a_t^{\dag} a_s = \delta _{st} I, \nonumber\\
&& a_s a_t + a_t a_s = a_s^{\dag} a_t^{\dag} + a_t^{\dag}
a_s^{\dag} = 0, \nonumber
\end{eqnarray}

\noindent where $s,t = 1, \ldots, n$. Let $H_n$ be a Hilbert space
with the dimension ${\rm dim}H_n=2^n$ and $\{ \ket{j_1 \ldots j_n}
\}$ be a fixed orthonormal basis, where indices $j_s= 0,1$. We
shall suppose that $\mathcal{A}$ is realized as the algebra of
operators in $H_n$ such that
\begin{eqnarray}
&& \!\!\!\!\! a_t^{\dag} \ket{j_1 \ldots j_n} \nonumber\\
&& \!\!\!\!\! = \left \{
\begin{array}{l} \! \exp\left( i \pi \sum\limits_{s=1}^{t-1}j_s \right)
\ket{j_1\ldots j_{t-1} \, 1 \,  j_{t+1}\ldots j_n} \text{~if~} j_t=0,\\
\! 0 \quad {\rm if} \quad j_t=1,\end{array}\right. \quad
\end{eqnarray}
and $a_t$ is conjugate to $a_t^{\dag}$.

A positive functional $\omega \in B(H_n)^{\ast}$ normed by the
condition $\omega (I) = 1$ is called a state on the algebra of all
bounded operators $B(H_n)$ in $H_n$. Given a state $\omega$, there
exists a unique positive unit-trace operator $\varrho $ such that
$\omega(x) = {\rm Tr}(\varrho x),\ x\in B(H_n)$. {\it A spectrum
of $\omega$} is defined as the spectrum of $\varrho$. The
following definition~\cite{wiseman-vaccaro} will play a major role
in our analysis:

\begin{definition} The state $\ket{\psi}$ satisfies the parity
superselection rule if a unit vector $\ket{\psi} \in H$ has the
form
\begin{equation}\label{1}
\ket{\psi} =\sum \limits _{j_1,\ldots ,j_n\in \{0,1\}}\lambda
_{j_1\ldots j_n} \ket{j_1\ldots j_n},\ \lambda _{j_1\ldots j_n}\in
{\mathbb C},
\end{equation}
where all the numbers $\sum_s j_s$ are even or odd alternatively
for all non-zero $\lambda _{j_1\ldots j_n}$.
\end{definition}

The set of all states satisfying the superselection rule contains
the important subset consisting of pure states for which the
number of particles $N$ is fixed. If this is the case, non-zero
terms in \eqref{1} satisfy the condition $\sum_s j_s = N = {\rm
const}$. It is worth mentioning that all pure quasifree fermionic
states have a fixed number of particles~\cite{evans,dierckx}.

Another important class of fermionic states is determined by their
peculiar action on monomials of the odd order:

\begin{definition}
The state $\omega$ is said to be even if
\[
\omega (a^{\#}_{s_1}\ldots a^{\#}_{s_{2k+1}})=0
\]
for any choice of $2k+1$ ladder operators $a^{\#}_s=a_s$ or
$a^{\#}_s=a_s^{\dag}$.
\end{definition}

The relation between the above classes becomes especially clear in
the case of pure states.

\begin{proposition}
\label{prop-even} The pure state $\omega$ is even iff it satisfies
the parity superselection rule.
\end{proposition}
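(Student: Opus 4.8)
The plan is to prove both directions of the equivalence between "even state" and "satisfies the parity superselection rule" by working in the occupation-number basis and tracking how ladder operators shift the parity $\sum_s j_s$ of a basis vector. The key structural observation is that each ladder operator $a_t$ or $a_t^{\dag}$ changes the total occupation $\sum_s j_s$ by exactly $\pm 1$, hence any monomial $a^{\#}_{s_1}\ldots a^{\#}_{s_m}$ of odd length $m=2k+1$ maps a vector of definite parity to a vector of the opposite parity (or to zero).

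Let me think about the forward direction first. Suppose $\ket{\psi}$ satisfies the parity superselection rule, so its support lies entirely in one parity sector (say all $\sum_s j_s$ even). Then for an odd monomial $M=a^{\#}_{s_1}\ldots a^{\#}_{s_{2k+1}}$, the vector $M\ket{\psi}$ lies in the opposite (odd) parity sector, which is orthogonal to $\ket{\psi}$ itself. Therefore $\omega(M)=\bra{\psi}M\ket{\psi}=0$, proving the state is even. This direction is essentially a one-line orthogonality argument once the parity-shifting property is established.

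For the converse — an even pure state satisfies the superselection rule — I expect the real work to lie. The strategy is to decompose $\ket{\psi}=\ket{\psi_{\rm even}}+\ket{\psi_{\rm odd}}$ into its even- and odd-parity components and show that evenness of $\omega$ forces one of these components to vanish. The natural tool is the parity operator $P=\exp\bigl(i\pi\sum_s a_s^{\dag}a_s\bigr)$, which satisfies $P a^{\#}_s P^{-1}=-a^{\#}_s$ and acts as $+1$ on the even sector and $-1$ on the odd sector. One checks that $P$ itself is a (limit of, or polynomial in) sums of even-order monomials, so evenness of $\omega$ should control $\omega(P)$; more precisely, since $P$ can be expanded in products $a_s^{\dag}a_s$ of even order, evenness gives constraints relating the even and odd weights. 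The cleanest route is to show directly that the cross terms $\bra{\psi_{\rm even}}x\ket{\psi_{\rm odd}}$ for odd monomials $x$ must vanish, and then use a counting/positivity argument: if both components were nonzero, one could exhibit a specific odd monomial (for instance a single $a_t^{\dag}$ or $a_t$ connecting a nonzero even-sector amplitude to a nonzero odd-sector amplitude) whose expectation value in $\ket{\psi}$ is nonzero, contradicting evenness.

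The main obstacle I anticipate is the converse, specifically constructing the explicit odd monomial that detects a nonzero overlap between the two parity sectors. It is not enough to know the sectors are orthogonal; I must show that if $\ket{\psi_{\rm even}}\neq 0$ and $\ket{\psi_{\rm odd}}\neq 0$ then some odd product of ladder operators has nonzero expectation, which requires arguing that the odd monomials act transitively enough to bridge any occupied even configuration to some occupied odd configuration. Concretely, given a basis ket $\ket{j_1\ldots j_n}$ with nonzero even-sector amplitude and a ket $\ket{j'_1\ldots j'_n}$ with nonzero odd-sector amplitude, the two strings differ in an odd number of positions, so an appropriately chosen product of creation and annihilation operators (one per differing position) maps one to the other up to a nonzero phase; isolating a single such matrix element — while ensuring no accidental cancellation with other contributions to the same expectation value — is the delicate point and would likely be handled by choosing $\ket{\psi_{\rm odd}}$ or $\ket{\psi_{\rm even}}$ to contain a configuration of extremal (minimal or maximal) particle number so that the connecting monomial is uniquely determined.
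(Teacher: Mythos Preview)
Your forward direction is exactly the paper's argument: an odd monomial flips the parity sector, so $\bra{\psi}M\ket{\psi}=0$ by orthogonality.

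For the converse, your overall strategy---decompose $\ket{\psi}=\ket{\psi_{\rm even}}+\ket{\psi_{\rm odd}}$ and exhibit an odd monomial with nonzero expectation when both pieces survive---is the same as the paper's, but your execution diverges at the crucial step. You plan to use ``one creation/annihilation operator per differing position'' between a chosen even-sector ket $\ket{J}$ and an odd-sector ket $\ket{K}$. That monomial is \emph{not} rank one, so $\bra{\psi}M\ket{\psi}$ is a sum over many pairs and your cancellation worry is real; the ``extremal particle number'' workaround you sketch is not developed and, as stated, does not obviously force uniqueness of the contributing pair.

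The paper sidesteps this entirely by choosing a monomial that \emph{is} a rank-one partial isometry $\ket{J}\bra{K}$. Concretely, include a factor at \emph{every} mode $s$: take $a_s^{\dag}$ if $j_s=1,\,k_s=0$; take $a_s$ if $j_s=0,\,k_s=1$; and take the projector $a_s a_s^{\dag}$ (resp.\ $a_s^{\dag}a_s$) if $j_s=k_s=0$ (resp.\ $=1$). The resulting product annihilates every basis vector except $\ket{K}$, which it sends to $\pm\ket{J}$. The total number of ladder operators has the parity of $|J|+|K|$, hence is odd. Then $\bra{\psi}M\ket{\psi}=\pm\overline{\lambda_J}\,\lambda_K\neq 0$ is a single term with no sum and no cancellation. (This is precisely the matrix-unit construction $A_{JK}$ introduced in the next section of the paper.) With this one observation the converse is immediate; the parity operator $P$ and the extremal-configuration argument are unnecessary detours.
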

\begin{proof}
Suppose that $\omega$ satisfies the parity superselection rule.
Consider the representation \eqref{1}. Since the odd order
monomial $a^{\#}_{s_1} \ldots a^{\#}_{s_{2k+1}}$ changes even
number of particles to the odd one and vice versa, the vectors
$a^{\#}_{s_1} \ldots a^{\#}_{s_{2k+1}} \ket{j_1\ldots j_n}$ and
$\ket{\tilde{j}_1\ldots \tilde{j}_n}$ are orthogonal for any
choice of $a^{\#}_s$ if $\ket{j_1\ldots j_n}$ and
$\ket{\tilde{j}_1\ldots \tilde{j}_n}$ correspond to non-zero terms
in Eq.~\eqref{1}. It follows that $\omega$ is even.

If $\omega$ does not satisfy the parity superselection rule, then
its representation in the form of \eqref{1} contains at least two
non-zero terms, say $\ket{j_1\ldots j_n}$ and $\ket{\tilde {j}_1
\ldots \tilde{j}_n}$, such that $\sum_s j_s$ and $\sum_s
\tilde{j}_s$ have different parity. There exists a unique monomial
of the odd order $a_{s_1} \ldots a_{s_{2k+1}}$, which is a partial
isometrical operator mapping $\ket{j_1 \ldots j_n}$ to
$\ket{\tilde{j}_1 \ldots \tilde{j}_n}$ and mapping all other basis
vectors to zero. Thus, the value of $\omega$ is not equal to zero
on this monomial. Hence, $\omega$ cannot be even.
\end{proof}

Before we proceed to the analysis of mode and particle reductions
of fermionic states, we construct the density matrix formalism in
the convenient form.

%%%%%%%%%%%%%%%%%%%%%%%%%%%%%%%%%%%%%%%%%%%%%%%%%%%%%%%%%%%%%%%%%%%%%%%%%%%%%%%%%%%%%%%%%%%%%%%%%

\section{\label{section-density} Density matrix for fermions}

Let us consider $n$ fermionic modes and the corresponding CAR
algebra $\mathcal{A}$, ${\rm dim}{\mathcal A}=4^n$. The generators
of $\mathcal{A}$ are $a_s a_s^{\dag}$, $a_s$, $a_s^{\dag}$, and
$a_s^{\dag} a_s$, $s = 1,\ldots,n$. Binary representation of
numbers $0,\ldots,2^n -1$ forms the set of multiindices ${\mathcal
J}\ni J=(j_1,\ldots ,j_n),\ j_s\in \{0,1\}$. Given two
multiindices $J,K \in \mathcal{J}$, let us define $A_{JK}\in
\mathcal{A}$ by the formula
\begin{equation}
\label{araki-wyss-formula} A_{JK} = c_{j_1}^{\dag} c_{j_2}^{\dag}
\ldots c_{j_n}^{\dag} c_{k_n} \ldots c_{k_2} c_{k_1},
\end{equation}
\noindent where
\begin{equation}
\label{c-operators} c_{j_s}^{\dag} = \left \{\begin{array}{ll} a_s a_s^{\dag} & \ {\rm if} \quad j_s=0,\\
a_s^{\dag} & \ {\rm if} \quad j_s=1;\end{array}\right. \quad
c_{k_s} = \left \{\begin{array}{ll} a_s a_s^{\dag} & \ {\rm if} \quad k_s=0,\\
a_s & \ {\rm if} \quad k_s=1.\end{array}\right.
\end{equation}

\begin{proposition} \label{prop-matrix-unit} The following relation holds:
\[
A_{JK}= \ket{j_1\ldots j_n} \bra{k_1\ldots k_n}.
\]
\end{proposition}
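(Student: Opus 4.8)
The plan is to split the monomial into its creation and annihilation halves,
\[
A_{JK} = C_J B_K, \qquad C_J := c_{j_1}^{\dag}\cdots c_{j_n}^{\dag}, \quad B_K := c_{k_n}\cdots c_{k_1},
\]
and to establish separately that $C_J = \ket{j_1\ldots j_n}\bra{0\ldots 0}$ and $B_K = \ket{0\ldots 0}\bra{k_1\ldots k_n}$. Since the adjoint of each creation-type factor is the annihilation-type factor with the same index value (the projector $a_s a_s^{\dag}$ is self-adjoint, and $(a_s^{\dag})^{\dag}=a_s$), one has $B_K = C_K^{\dag}$, so the second identity is just the adjoint of the first. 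The proposition then collapses to the single claim $C_J=\ket{j_1\ldots j_n}\bra{0\ldots 0}$: writing $\ket{J}:=\ket{j_1\ldots j_n}$, $\ket{K}:=\ket{k_1\ldots k_n}$, it gives $A_{JK}=\ket{J}\ip{0\ldots0}{0\ldots0}\bra{K}=\ket{J}\bra{K}$.

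To prove that claim I would act with $C_J$ on an arbitrary basis vector $\ket{m_1\ldots m_n}$. The key structural fact is that each factor $c_{j_s}^{\dag}$ touches mode $s$ alone: by the CAR relations $a_s a_s^{\dag}=I-a_s^{\dag}a_s$ is the projector onto the subspace where mode $s$ is empty, while $a_s^{\dag}$ is the creation operator, which annihilates any vector already occupied in mode $s$. Because the product $C_J$ is ordered so that $c_{j_n}^{\dag}$ acts first and $c_{j_1}^{\dag}$ last, the occupation of mode $s$ is left unchanged until the factor $c_{j_s}^{\dag}$ is reached. Hence, for either value of $j_s$, a nonzero result forces mode $s$ to be empty at that instant, i.e. $m_s=0$. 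Running over all $s$ shows $C_J\ket{m_1\ldots m_n}=0$ unless $M=(0,\ldots,0)$, so $C_J$ is supported on the single bra $\bra{0\ldots 0}$.

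It then remains to evaluate $C_J\ket{0\ldots 0}$. Applying the factors from right to left, each $c_{j_s}^{\dag}$ either fixes the empty mode $s$ (when $j_s=0$) or, by the explicit action of $a_s^{\dag}$ recorded in Sec.~\ref{section-notation}, raises it to occupation $1$ together with a Jordan--Wigner sign $\exp\!\big(i\pi\sum_{r<s}\nu_r\big)$, where $\nu_r$ denotes the current occupation of the lower modes. At the moment $c_{j_s}^{\dag}$ acts, all modes $r<s$ are still empty, so $\nu_r=0$ and the sign is $+1$. Therefore every phase is trivial and $C_J\ket{0\ldots 0}=\ket{j_1\ldots j_n}$, which establishes $C_J=\ket{j_1\ldots j_n}\bra{0\ldots 0}$ and hence the proposition.

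The only delicate point is the bookkeeping of the anticommutation signs; the observation that dissolves it is precisely the right-to-left ordering against the vacuum, which guarantees that the factor for mode $s$ always meets empty lower modes so that the sign string collapses to $+1$. I expect this phase tracking, rather than the underlying operator algebra, to be the step that must be written out with care.
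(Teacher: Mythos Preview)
Your proof is correct and follows essentially the same strategy as the paper: factor $A_{JK}$ into a creation half and an annihilation half, observe that the two halves are adjoints of one another, and show that each half is the rank-one operator connecting the vacuum to the appropriate basis vector. The paper's argument is terser---it simply asserts that $C_K$ is a rank-one partial isometry with $C_K\ket{k_1\ldots k_n}=\ket{0\ldots0}$---whereas you spell out explicitly why the Jordan--Wigner phases collapse to $+1$ under the right-to-left ordering against the vacuum; this is the detail the paper elides.
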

\begin{proof}
Consider the operator $C_K = c_{k_n}\ldots c_{k_1}$, which is a
rank one partial isometry in the sense that
$C_K=\ket{\varphi}\bra{\chi}$ for some unit vectors
$\ket{\varphi}, \ket{\chi} \in H$ that are either orthogonal or
coincide. Moreover, $C_K \ket{k_1 \ldots k_n} = \ket{0\ldots 0}$.
Then, $A_{JK} = C_J^{\dag} C_K$, which concludes the proof.
\end{proof}

The construction similar to that in Eq.~\eqref{araki-wyss-formula}
was used in the paper~\cite{Araki} for other purposes.

\begin{corollary}
\label{corollary-1} The element $\varrho \in {\mathcal A}$ defines
a valid quantum state iff it can be represented in the form
\begin{equation}\label{star}
\varrho =\sum\limits_{J,K\in {\mathcal J}} \lambda_{JK} A_{JK},
\end{equation}
where $(\lambda_{JK})$ is a positive semidefinite matrix with the
unit trace $\sum_J\lambda_{JJ}=1$ (fermionic density matrix).
\end{corollary}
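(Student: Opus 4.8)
The plan is to exploit Proposition~\ref{prop-matrix-unit}, which identifies each $A_{JK}$ with the matrix unit $\ket{j_1\ldots j_n}\bra{k_1\ldots k_n}$ in the fixed orthonormal basis $\{\ket{j_1\ldots j_n}\}$. First I would set up a dimension count: there are $4^n$ pairs $(J,K)$ with $J,K\in\mathcal{J}$, while ${\rm dim}\,B(H_n)=(2^n)^2=4^n={\rm dim}\,\mathcal{A}$. Since the matrix units $\{\ket{J}\bra{K}\}$ are linearly independent and span $B(H_n)$, they form a basis of $B(H_n)$; as each $A_{JK}$ lies in $\mathcal{A}\subseteq B(H_n)$, the dimension coincidence forces $\mathcal{A}=B(H_n)$ with $\{A_{JK}\}$ serving as its standard matrix-unit basis. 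Consequently every $\varrho\in\mathcal{A}$ admits a unique expansion $\varrho=\sum_{J,K}\lambda_{JK}A_{JK}$, whose coefficients are precisely the matrix elements $\lambda_{JK}=\bra{j_1\ldots j_n}\varrho\ket{k_1\ldots k_n}$.

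The remaining task is to translate the two defining properties of a quantum state --- positivity and unit trace --- into conditions on the matrix $(\lambda_{JK})$. For the trace, I would compute ${\rm Tr}(A_{JK})={\rm Tr}(\ket{J}\bra{K})=\ip{K}{J}=\delta_{JK}$, whence ${\rm Tr}(\varrho)=\sum_J\lambda_{JJ}$, so the normalization $\omega(I)=1$ is equivalent to $\sum_J\lambda_{JJ}=1$. For positivity, the key observation is that, because $\{\ket{J}\}$ is orthonormal, the numbers $\lambda_{JK}$ are exactly the entries of the matrix representing the operator $\varrho$ in this basis. Thus the operator inequality $\varrho\ge 0$ --- equivalently, $\bra{\phi}\varrho\ket{\phi}\ge 0$ for every $\ket{\phi}\in H_n$ --- holds if and only if the quadratic form $\sum_{J,K}\overline{v_J}\,\lambda_{JK}\,v_K\ge 0$ for all complex vectors $(v_J)$, i.e.\ if and only if $(\lambda_{JK})$ is positive semidefinite. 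Combining the two equivalences yields the asserted characterization in both directions, using that a valid state is by definition a positive unit-trace operator.

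The argument is essentially bookkeeping once Proposition~\ref{prop-matrix-unit} is in hand, so I do not anticipate a genuine obstacle. The only point requiring care is the identification $\mathcal{A}=B(H_n)$: one must confirm that the $A_{JK}$ exhaust the entire operator algebra rather than some proper subalgebra, since this is what guarantees that \emph{every} element of $\mathcal{A}$ --- and not merely some --- is expressible in the form~\eqref{star}. This is secured precisely by the dimension coincidence ${\rm dim}\,\mathcal{A}={\rm dim}\,B(H_n)=4^n$ together with the linear independence of distinct matrix units.
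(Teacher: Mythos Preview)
Your proposal is correct and follows essentially the same route as the paper: both use Proposition~\ref{prop-matrix-unit} to identify the $A_{JK}$ with matrix units, infer that every element of $\mathcal{A}$ expands as in~\eqref{star}, and then read off the positivity and unit-trace conditions as the corresponding conditions on the coefficient matrix $(\lambda_{JK})$. Your version is slightly more explicit---you spell out the dimension count establishing $\mathcal{A}=B(H_n)$ and phrase positivity via the quadratic form on vectors, whereas the paper works through the functional $\omega(x)=\sum_{J,K}\lambda_{JK}\,{\rm Tr}(A_{JK}x)$---but these are stylistic variations on the same argument.
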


\begin{proof}
Any element of the algebra $\mathcal{A}$ can be represented in the
form \eqref {star} because $(A_{JK})$ are matrix units due to
Proposition~\ref{prop-matrix-unit}. The functional $\omega(x) =
\sum_{J,K\in \mathcal{J}} \lambda_{JK} {\rm Tr}(A_{JK} x)$ gives
non-negative values for positive semidefinite operators $x$ if and
only if $(\lambda_{JK})$ is a positive semidefinite matrix too (as
the functional is a trace of the product of two matrices).
Finally, $\omega(I) = \sum_J \lambda_{JJ} = 1$.
\end{proof}

The $2^n \times 2^n$ matrix $(\lambda_{JK})$ is a fermionic
density matrix and fulfils the conventional requirements
(Hermitian, positive semidefinite, and unit trace). The following
Corollary provides a convenient description of the density
operator.

\begin{corollary}
\label{corollary-matrix-of-averages} The $2^n \times 2^n$ matrix
with elements $\ave{A_{KJ}}$ is a density matrix of an $n$-mode
fermionic state over which the average is taken.
\end{corollary}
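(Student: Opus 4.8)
The plan is to show that the matrix of averages is nothing but the density matrix $(\lambda_{JK})$ already produced in Corollary~\ref{corollary-1}, after which positivity, Hermiticity, and unit trace are inherited automatically. First I would invoke Proposition~\ref{prop-matrix-unit} to read each average as a matrix element: since $A_{KJ} = \ket{k_1\ldots k_n}\bra{j_1\ldots j_n}$ is the matrix unit sending $\ket{j_1\ldots j_n}$ to $\ket{k_1\ldots k_n}$, the quantity $\ave{A_{KJ}} = {\rm Tr}(\varrho A_{KJ})$ is just the pairing of $\varrho$ against this unit. Writing $\ket{J}\equiv\ket{j_1\ldots j_n}$ for brevity, the identity to be established is $\ave{A_{KJ}} = \lambda_{JK}$.

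The computation is then a one-line orthonormality collapse. Expanding $\varrho = \sum_{J',K'}\lambda_{J'K'}A_{J'K'}$ as guaranteed by Corollary~\ref{corollary-1} and using Proposition~\ref{prop-matrix-unit} once more, I would write
\begin{equation}
\ave{A_{KJ}} = \sum_{J',K'}\lambda_{J'K'}\,{\rm Tr}\!\left(\ket{J'}\bra{K'}\,\ket{K}\bra{J}\right) = \sum_{J',K'}\lambda_{J'K'}\,\ip{K'}{K}\,\ip{J}{J'}.
\end{equation}
The orthonormality relations $\ip{K'}{K}=\delta_{K'K}$ and $\ip{J}{J'}=\delta_{J'J}$ kill all but one term of the double sum, leaving $\ave{A_{KJ}} = \lambda_{JK}$. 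Equivalently, and more transparently, $\ave{A_{KJ}} = {\rm Tr}(\varrho\,\ket{K}\bra{J}) = \bra{J}\varrho\ket{K} = \lambda_{JK}$, which is exactly the $(J,K)$ entry of the density matrix of Corollary~\ref{corollary-1}.

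Hence the $2^n\times 2^n$ array whose $(J,K)$ entry is $\ave{A_{KJ}}$ coincides entrywise with $(\lambda_{JK})$, and by Corollary~\ref{corollary-1} the latter is Hermitian, positive semidefinite, and of unit trace, completing the claim. There is no genuine obstacle here; the only point demanding care is the index bookkeeping. One must note that it is the average of the \emph{swapped} unit $A_{KJ}$ that equals $\lambda_{JK}$, the transposition being precisely the one intrinsic to the trace pairing $\ave{\ket{K}\bra{J}}=\bra{J}\varrho\ket{K}$. Getting this ordering right is what makes the stated corollary the correct reconstruction formula rather than its transpose, and I would state it explicitly to avoid the off-by-a-transpose error that is easy to commit.
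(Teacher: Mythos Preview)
Your proof is correct and is essentially the same as the paper's: the paper condenses the argument to the single line $\ave{A_{KJ}} = {\rm Tr}(A_{JK}^{\dag}\varrho) = \lambda_{JK}$, which is exactly your ``equivalently, and more transparently'' computation $\ave{A_{KJ}} = {\rm Tr}(\varrho\,\ket{K}\bra{J}) = \bra{J}\varrho\ket{K} = \lambda_{JK}$. Your expanded double-sum version and your explicit remark on the index swap are welcome elaborations but add nothing structurally new.
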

\begin{proof}
In fact, $\ave{A_{KJ}} = {\rm Tr}(A_{JK}^{\dag} \varrho) =
\lambda_{JK}$.
\end{proof}

\begin{example}
\label{example-density-matrices} Using
Eq.~\eqref{araki-wyss-formula} and
Corollary~\ref{corollary-matrix-of-averages}, the 2-mode fermionic
state can be described by the (total) density matrix
\begin{equation}
\label{total-density-matrix} \left(
\begin{array}{cccc}
  \ave{a_1 a_1^{\dag} a_2 a_2^{\dag}} & \ave{a_1 a_1^{\dag} a_2^{\dag}} & \ave{a_1^{\dag} a_2 a_2^{\dag}} & \ave{a_1^{\dag} a_2^{\dag}} \\
  \ave{a_1 a_1^{\dag} a_2} & \ave{a_1 a_1^{\dag} a_2^{\dag} a_2} & \ave{a_1^{\dag} a_2} & \ave{a_1^{\dag} a_2^{\dag} a_2} \\
  \ave{a_1 a_2 a_2^{\dag}} & \ave{a_2^{\dag} a_1} & \ave{a_1^{\dag} a_1 a_2 a_2^{\dag}} & \ave{a_1^{\dag} a_2^{\dag} a_1} \\
  \ave{a_2 a_1} & \ave{a_1 a_2^{\dag} a_2} & \ave{a_1^{\dag} a_2 a_1} & \ave{a_1^{\dag} a_1 a_2^{\dag} a_2} \\
\end{array}
\right),
\end{equation}

\noindent which is constructed with the help of conventional sets
of multiindices $\mathcal{J},\mathcal{K} =
\{(0,0);(0,1);(1,0);(1,1)\}$ and differs from symbols used in the
description of fermionic Gaussian states~\cite{bravyi-2005}
(reviewed, e.g., in~\cite{greplova}). \hfill $\square$
\end{example}

%%%%%%%%%%%%%%%%%%%%%%%%%%%%%%%%%%%%%%%%%%%%%%%%%%%%%%%%%%%%%%%%%%%%%%%%%%%%%%%%%%%%

\section{\label{section-spectra} Spectra of mode-reduced states}

In this section, we deal with coherent superpositions of different
number states, so we exploit the bipartition based not on the
particles but rather on fermionic modes. Fermionic modes can be
thought of as nodes that can be either occupied or not by
particles. The bipartition is merely an aggregation of all nodes
into two groups. For instance, bipartitions with respect to
different spin components were considered recently~\cite{paraan}.
Mathematically, we deal with the algebraic bipartition in the
second quantized
description~\cite{zanardi,amosov-mancini,benatti-osid-2014,benatti-2014,marzolino},
which is used in the study of
entanglement~\cite{wiseman-vaccaro,shi,banuls,iemini,gigena}.

Partition of modes is performed among single-particle states that
fermions can potentially occupy, for example, nodes of some
lattice potential in coordinate configuration space, or states
with different momentum in momentum configuration space. Let the
first subsystem contain $m$ modes and the second one contain $n-m$
modes. In mathematical terms, $\mathcal{A}$ is obtained by joining
two algebras of canonical anticommutation relations ${\mathcal
A}_1$ and ${\mathcal A}_2$ with the generators $\{a_1,
a_1^{\dag},\ldots, a_m, a_m^{\dag}\}$ and
$\{a_{m+1},a_{m+1}^{\dag}$, $\ldots, a_n, a_n^{\dag}\}$,
respectively. Fix a unit vector $\ket{\psi} \in H$.

\begin{definition} The states $\omega_j(x)= \bra{\psi} x \ket{\psi},\ x\in {\mathcal A}_j$
that are the restrictions of a pure state $\ket{\psi} \bra{\psi}$
to the algebras ${\mathcal A}_j$, $j=1,2$, are said to be
mode-reduced (partial) states of $\ket{\psi} \bra{\psi}$.
\end{definition}

\begin{example}
For a state (\ref {total-density-matrix}) the reduced density matrices describing the first mode and the
second mode read
\begin{equation}
\label{reduced-matrices} \left(
\begin{array}{cc}
  \ave{a_1 a_1^{\dag}} & \ave{a_1^{\dag}} \\
  \ave{a_1} & \ave{a_1^{\dag} a_1} \\
\end{array}
\right) \quad \text{and} \quad \left(
\begin{array}{cc}
  \ave{a_2 a_2^{\dag}} & \ave{a_2^{\dag}} \\
  \ave{a_2} & \ave{a_2^{\dag} a_2} \\
\end{array}
\right),
\end{equation}

\noindent respectively. Note that both matrices
\eqref{reduced-matrices} cannot be simultaneously obtained from
the total density matrix \eqref{total-density-matrix} by
conventional partial trace methods. This is a peculiar property of
fermionic states which differs them from the bosonic ones. \hfill
$\square$

\end{example}

\begin{remark}
Let us clarify the relation between the introduced definition of
the mode-reduced states and the orbital reduced density matrices
used in quantum chemistry \cite{boguslawski-2013,rissler-2006}.

In quantum chemistry, the mode is a composite of the space orbital
state and the electron spin projection ($\ket{\uparrow}$ or
$\ket{\downarrow}$), and the whole system state vector is written
in the form $\ket{\Psi} = \sum_{n_1,\ldots,n_L}
\psi_{n_1,\ldots,n_L} \ket{n_1} \otimes \ldots \otimes \ket{n_L}$,
where $n_i = 0,1$ indicates the occupation of the corresponding
spin-orbit, $L$ is the number of active spin-orbits. The
antisymmetric property of the electron wavefunction is encoded in
the coefficients $\psi_{n_1,\ldots,n_L}$, and the global basis
states $\ket{n_1} \otimes \ldots \otimes \ket{n_L}$ already have
the tensor product structure. The reduced states are then readily
obtained by tracing out undesired orbits.

Alternatively, one can use antisymmetric vectors (Slater
determinants) $\ket{e_1} = \ket{n_1 \ldots n_{i-1}}$, $\ket{e_2} =
\ket{n_{i+1} \ldots n_{j-1}}$, $\ket{e_3} = \ket{n_{j+1} \ldots
n_L}$ and expand $\ket{\Psi} = \sum_{n_i,n_j,e_1,e_2,e_3}
\psi_{n_i,n_j,e_1,e_2,e_3} \ket{e_1} \otimes \ket{n_i} \otimes
\ket{e_2} \otimes \ket{n_j} \otimes \ket{e_3}$. The reduced 2-mode
state reads ${\rm Tr}_{e_1,e_2,e_3} \ket{\Psi}\bra{\Psi}$ as each
vector $\ket{e_1} \otimes \ket{n_i} \otimes \ket{e_2} \otimes
\ket{n_j} \otimes \ket{e_3}$ does have a tensor product structure.
Using the orbit basis $\ket{n_i} =
\{\ket{0},\ket{\uparrow},\ket{\downarrow},\ket{\uparrow\downarrow}\}$,
one gets the 2-orbital reduced density
operator~\cite{boguslawski-2013}.

In our approach, the global basis states $\ket{j_1 \ldots j_n}$
are antisymmetric automatically and coefficients $\lambda
_{j_1\ldots j_n}$ in Eq.~\eqref{1} are arbitrary. Thus, we deal
with basis states that do not have the tensor product structure at
all. Similarly, the action of mode operators $a_t$ and
$a_t^{\dag}$ is non-local. Since transitions between the states
with different numbers of particles are prohibited in quantum
chemistry \cite{boguslawski-2013,rissler-2006}, both
quantum-chemical and our definitions of reduced states do coincide
for systems with a fixed number of electrons. The downside of the
quantum-chemical definition may only appear for systems with
variable number of particles, since the creation and annihilation
operators cannot be represented in the tensor product form $I
\otimes \ldots \otimes I \otimes \ket{1}\bra{0} \otimes I \otimes
\ldots \otimes I$ or $I \otimes \ldots \otimes I \otimes
\ket{0}\bra{1} \otimes I \otimes \ldots \otimes I$. \hfill
$\square$
\end{remark}

Our further goal is to characterize all the states that have
equispectral mode-reduced states. To anticipate general results,
let us begin with the simplest case of two fermionic modes that
can be occupied by a system with a varying number of
quasiparticles in a pure state $\ket{\psi} = c_{00}\ket{00} +
c_{01}\ket{01} + c_{10}\ket{10} + c_{11} \ket{11}$. According to
Example~\ref{example-density-matrices} the total density matrix
\eqref{total-density-matrix} for such a state is $(c_{00}, c_{01},
c_{10}, c_{11})^{\top}$ $\times$ $(\overline{c_{00}},
\overline{c_{01}}, \overline{c_{10}}, \overline{c_{11}})$ and
corresponds to the pure state indeed, whereas the reduced density
matrices \eqref{reduced-matrices} take the form
\begin{eqnarray}
\label{reduced-matrix-1} && \Lambda_1 = \left(
\begin{array}{cc}
  |c_{00}|^2 + |c_{01}|^2 & c_{00}\overline{c_{10}} + c_{01} \overline{c_{11}} \\
  \overline{c_{00}}c_{10} + \overline{c_{01}} c_{11} & |c_{10}|^2 + |c_{11}|^2 \\
\end{array}
\right) \quad \text{and} \nonumber\\
&& \label{reduced-matrix-2} \Lambda_2 = \left(
\begin{array}{cc}
  |c_{00}|^2 + |c_{10}|^2 & c_{00}\overline{c_{01}} - c_{10} \overline{c_{11}} \\
  \overline{c_{00}}c_{01} - \overline{c_{10}} c_{11} & |c_{01}|^2 + |c_{11}|^2 \\
\end{array}
\right), \nonumber
\end{eqnarray}

\noindent respectively. The spectra of $\Lambda_1$ and $\Lambda_2$
would be identical if and only if ${\rm Tr}(\Lambda_1) = {\rm
Tr}(\Lambda_2)$ and ${\rm Tr}(\Lambda_1^2) = {\rm
Tr}(\Lambda_2^2)$. The first condition always holds true whereas
the second one reduces to ${\rm Tr}(\Lambda_1^2 - \Lambda_2^2) = 8
{\rm Re}(c_{00} c_{11} \overline{ c_{01} c_{10} }) = 0$. This
observation can be summarized as follows.

\begin{proposition}
\label{prop-two-mode-case} The two-mode fermionic state
$\ket{\psi} = c_{00}\ket{00} + c_{01}\ket{01} + c_{10}\ket{10} +
c_{11} \ket{11}$ has equispectral mode-reduced density operators
if and only if ${\rm Re}(c_{00} c_{11} \overline{ c_{01} c_{10} })
= 0$.
\end{proposition}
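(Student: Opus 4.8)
The plan is to exploit the fact that both mode-reduced operators $\Lambda_1$ and $\Lambda_2$ are $2 \times 2$ Hermitian matrices, so their spectra are completely fixed by just two real invariants --- for definiteness, the trace ${\rm Tr}(\Lambda_j)$ and the trace of the square ${\rm Tr}(\Lambda_j^2)$ (equivalently, the trace and the determinant). Two such matrices are therefore equispectral if and only if these two invariants agree. Accordingly, I would reduce the geometric statement about coincident spectra to a pair of scalar identities.

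First I would verify that the traces always match. Summing the diagonal entries gives ${\rm Tr}(\Lambda_1) = {\rm Tr}(\Lambda_2) = |c_{00}|^2 + |c_{01}|^2 + |c_{10}|^2 + |c_{11}|^2 = 1$ by the normalization of $\ket{\psi}$, so the first invariant imposes no constraint. This isolates the entire content of the proposition in the second invariant.

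Second, I would compute ${\rm Tr}(\Lambda_1^2) - {\rm Tr}(\Lambda_2^2)$ directly. Using the explicit forms displayed above together with the identity ${\rm Tr}(\Lambda^2) = \Lambda_{11}^2 + \Lambda_{22}^2 + 2|\Lambda_{12}|^2$ for a $2\times 2$ Hermitian matrix, the difference splits into a contribution from the diagonal entries and a contribution from the off-diagonal entries. Expanding each, one finds that all purely quartic terms of the form $|c_{jk}|^2 |c_{lm}|^2$ cancel between the two contributions, leaving only the cross term. The key algebraic observation is that the two off-diagonal moduli $|c_{00}\overline{c_{10}} + c_{01}\overline{c_{11}}|^2$ and $|c_{00}\overline{c_{01}} - c_{10}\overline{c_{11}}|^2$ each produce a term proportional to ${\rm Re}(c_{00}c_{11}\overline{c_{01}c_{10}})$, and because of the opposite relative sign in $\Lambda_2$ these two contributions add rather than cancel, giving ${\rm Tr}(\Lambda_1^2) - {\rm Tr}(\Lambda_2^2) = 8\,{\rm Re}(c_{00}c_{11}\overline{c_{01}c_{10}})$.

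Finally, combining the two steps: the spectra of $\Lambda_1$ and $\Lambda_2$ coincide if and only if both invariants agree, and since the traces always agree this is equivalent to ${\rm Tr}(\Lambda_1^2) = {\rm Tr}(\Lambda_2^2)$, i.e. to ${\rm Re}(c_{00}c_{11}\overline{c_{01}c_{10}}) = 0$, which is the claim. The only delicate point --- the ``main obstacle'' --- is the bookkeeping in the second step: one must track the sign discrepancy between the off-diagonal entries of $\Lambda_1$ and $\Lambda_2$ carefully, since it is precisely this relative sign (a fermionic anticommutation signature) that converts a potential cancellation of the cross terms into their reinforcement and thereby renders the condition nontrivial.
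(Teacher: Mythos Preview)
Your proposal is correct and follows essentially the same approach as the paper: reduce equispectrality of two $2\times 2$ Hermitian matrices to equality of ${\rm Tr}(\Lambda_j)$ and ${\rm Tr}(\Lambda_j^2)$, note that the first always holds, and compute ${\rm Tr}(\Lambda_1^2 - \Lambda_2^2) = 8\,{\rm Re}(c_{00}c_{11}\overline{c_{01}c_{10}})$. The paper presents exactly this computation in the paragraph preceding the proposition, so there is no meaningful difference in method.
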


\begin{example}
\label{example-spectra} Suppose $\ket{\psi} =\frac {1}{2}(\ket{00}
+ \ket{01} + \ket{10} + \ket{11})$, then ${\rm
Spect}(\omega_1)=\{0,1\}$ whereas ${\rm
Spect}(\omega_2)=\{\frac{1}{2},\frac{1}{2}\}$. \hfill $\square$
\end{example}

\begin{theorem} \label{theorem-1} Suppose that a pure state $\omega$ satisfies
the parity superselection rule. Then, the spectra of $\omega_1$
and $\omega_2$ coincide.
\end{theorem}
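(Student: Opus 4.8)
The plan is to show that a parity-superselection (even) pure state can be transformed, by a local unitary acting on the second subsystem, into a genuine tensor-product state in the ordinary (non-fermionic) sense, after which the equispectrality of the two mode-reduced states becomes the familiar statement of the Schmidt decomposition. The key technical obstruction, and the feature that distinguishes fermions from qubits, is the nonlocal Jordan--Wigner phase $\exp(i\pi\sum_{s<t} j_s)$ that appears when a creation operator in the second block $\mathcal{A}_2$ acts past the occupation numbers of the first block $\mathcal{A}_1$. This phase is exactly what spoils the naive tensor factorization (compare the minus signs in $\Lambda_2$ of the two-mode example), so the crux of the argument is to absorb it into a unitary that can be written as an operator on $\mathcal{A}_2$ alone.

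First I would expand $\ket{\psi}=\sum_{J}\lambda_J\ket{j_1\ldots j_m}\ket{j_{m+1}\ldots j_n}$ and split each multi-index $J$ into its first-block part $J^{(1)}=(j_1,\ldots,j_m)$ and second-block part $J^{(2)}=(j_{m+1},\ldots,j_n)$. Let $N_1=\sum_{s\le m}j_s$ denote the first-block particle number on a given basis vector. The even-ness hypothesis says that for every term with $\lambda_J\neq0$ the total $\sum_s j_s$ has fixed parity, so $N_1$ and $|J^{(2)}|:=\sum_{s>m}j_s$ have correlated parities (their sum is even, or their sum is odd, uniformly). The next step is to compute the two mode-reduced density matrices via Corollary~\ref{corollary-matrix-of-averages}, i.e.\ $(\varrho_1)$ has entries $\ave{A_{KJ}}$ built from operators in $\mathcal{A}_1$ and likewise for $\varrho_2$. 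When tracing $\ket{\psi}\bra{\psi}$ down to $\mathcal{A}_2$, each matrix element carries a relative sign $(-1)^{N_1 (|J^{(2)}|-|K^{(2)}|)}$ coming from commuting the $\mathcal{A}_2$ ladder operators through the $\mathcal{A}_1$ occupations.

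The heart of the proof is to define the operator $U=(-1)^{\hat N_1 \hat N_2}$, where $\hat N_1$ and $\hat N_2$ are the total number operators of the two blocks, and observe that conjugation by the associated local phase on the second subsystem (equivalently, relabelling $\lambda_J \mapsto (-1)^{N_1 |J^{(2)}|}\lambda_J$) removes precisely these Jordan--Wigner signs. Under the parity constraint, $(-1)^{N_1|J^{(2)}|}$ depends on $J$ only through data that can be written as a product of a phase depending on $J^{(1)}$ alone and a phase depending on $J^{(2)}$ alone; this is where evenness is indispensable, since it forces $N_1$ and $|J^{(2)}|$ into the single fixed-parity sector where the cross term factorizes. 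Consequently the transformed state is an honest tensor product vector, its two reduced matrices $\widetilde{\varrho}_1,\widetilde{\varrho}_2$ are related by the Schmidt decomposition and hence isospectral, and because the local unitary leaves the spectrum of each reduced operator invariant we conclude $\mathrm{Spect}(\omega_1)=\mathrm{Spect}(\omega_2)$.

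I expect the main obstacle to be the bookkeeping of the Jordan--Wigner phases: verifying cleanly that the sign $(-1)^{N_1|J^{(2)}|}$ factorizes across the bipartition under the even-parity hypothesis, and confirming that the compensating transformation is genuinely a \emph{local} unitary on $\mathcal{A}_2$ (so that it preserves $\mathrm{Spect}(\omega_2)$ and does not touch $\omega_1$). Once the state is reduced to tensor-product form the spectral identity is immediate, so essentially all the content lives in making the phase-absorption step rigorous.
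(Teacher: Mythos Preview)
Your approach is correct and shares the paper's overall strategy: pass to the tensor-product space via the Jordan--Wigner isometry $U\ket{j_1\ldots j_n}=\ket{j_1\ldots j_m}\otimes\ket{j_{m+1}\ldots j_n}$, identify the obstruction to $\omega_2=\Omega_2$ as the parity string $\Gamma=\prod_{s\le m}(a_sa_s^{\dag}-a_s^{\dag}a_s)$ picked up by second-block ladder operators, show that the obstruction disappears for parity-superselected states, and finish with the Schmidt decomposition.

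The execution differs, however. The paper works at the level of monomials rather than matrix entries: for $x\in\mathcal{A}_2$ of degree $k$ one has $UxU^{\dag}=\Gamma^{k}\otimes x^{(2)}$, so $\omega_2$ and $\Omega_2$ agree on even monomials (since $\Gamma^{2k}=I$) and both vanish on odd monomials (since $\omega$ is even by Proposition~\ref{prop-even}). Thus $\omega_2=\Omega_2$ \emph{identically}, not merely up to a local unitary; combined with the automatic identity $\omega_1=\Omega_1$ and Schmidt on $\Omega$, the theorem follows. No explicit phase bookkeeping or compensating unitary appears.

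Your coefficient-level route is sound but slightly roundabout, and in fact the ``absorption'' step collapses further than you anticipate. Under the parity constraint, any nonzero cross term $\lambda_{J^{(1)}J^{(2)}}\overline{\lambda_{J^{(1)}K^{(2)}}}$ forces $|J^{(2)}|\equiv|K^{(2)}|\pmod 2$ (both congruent to $|J^{(1)}|$, resp.\ to $|J^{(1)}|+1$, in the even, resp.\ odd, sector), so the sign $(-1)^{N_1(|J^{(2)}|+|K^{(2)}|)}$ is always $+1$ and there is nothing to absorb. This is precisely the matrix-element translation of the paper's ``odd monomials do not contribute'' observation. One wording caution: $(-1)^{\hat N_1\hat N_2}$ is not itself local; the content of your factorization is that on the fixed-parity support it coincides with a genuinely local operator ($(-1)^{\hat N_2}$ in the even sector, the identity in the odd sector).
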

\begin{proof}
Let us define an isometry $U:H_n\to H_{m}\otimes H_{n-m}$ by the
formula
\begin{equation}\label{u}
U \ket{j_1\ldots j_n} = \ket{j_1 \ldots j_m} \otimes \ket{j_{m+1}
\ldots j_{n}}.
\end{equation}
Then
\begin{equation}
U a_s^{\#} U^{\dag} = \left\{ \begin{array}{ll}
  a_s^{(1)\#}\otimes I & \text{if} \quad s = 1, \ldots, m, \\
  \Gamma \otimes a_s^{(2)\#} & \text{if} \quad s = m+1, \ldots, n, \\
\end{array} \right.
\end{equation}

\noindent where the fermionic operators $a_s^{(1)\#}$ and
$a_s^{(2)\#}$ act on Hilbert spaces $H_{m}$ and $H_{n-m}$,
respectively, and $\Gamma = \prod_{s=1}^m (a_s^{(1)} a_s^{(1)\dag}
- a_s^{(1)\dag} a_s^{(1)})$.

Consider the pure state $\Omega (X)=\omega (U^{\dag} X U)$, $X\in
B(H_m\otimes H_{n-m})$. The spectra of $\Omega _1 = {\rm
Tr}_{H_{n-m}}(\Omega)$ and $\Omega_2 = {\rm Tr}_{H_m}(\Omega)$ are
known to coincide as an immediate consequence of the Schmidt
decomposition in tensor product Hilbert space $H_m\otimes H_{n-m}$
(see, e.g., \cite{holevo}). It follows from the construction that
for all $x\in {\mathcal A}_1$ we have
\begin{equation}
\label{U-reduction-1} \omega_1(x) = \omega(x) = \Omega (U x
U^{\dag}) = \Omega (x^{(1)} \otimes I) = \Omega_1 (x),
\end{equation}
therefore, $\omega_1$ and $\Omega_1$ have the same spectra.

To prove that spectra of $\omega_2$ and $\Omega_2$ coincide too,
we first notice the trivial action of these functionals on odd
order monomials, namely, $\omega_2(a^{\#}_{s_1}\ldots
a^{\#}_{s_{2k+1}}) = \omega (a^{\#}_{s_1}\ldots a^{\#}_{s_{2k+1}})
= 0$ because $\omega$ is even state in virtue of
Proposition~\ref{prop-even}. On the other hand, for all $m+1 \le
s_p \le n$
\begin{eqnarray}
&& \Omega_2 (a^{(2)\#}_{s_1} \ldots a^{(2)\#}_{s_{2k+1}}) = \Omega
(I \otimes a^{(2)\#}_{s_1} \ldots a^{(2)\#}_{s_{2k+1}})
\nonumber\\
&& = \omega \left(\prod_{s=1}^m (a_s a_s^{\dag} - a_s^{\dag} a_s)
a^{\#}_{s_1}\ldots a^{\#}_{s_{2k+1}} \right) = 0
\label{equality-odd}
\end{eqnarray}

\noindent because $\omega$ is even. Thus, both $\omega_2$ and
$\Omega_2$ vanish on odd monomials. As far as even monomials are
concerned,
\begin{eqnarray}
&& \omega (a^{\#}_{s_1}\ldots a^{\#}_{s_{2k}}) = \Omega (U
a^{\#}_{s_1} \ldots a^{\#}_{s_{2k}} U^{\dag}) \nonumber\\
&& =  \Omega (\Gamma^{2k} \otimes a^{(2)\#}_{s_1} \ldots
a^{(2)\#}_{s_{2k}} ) = \Omega_2 (s^{(2)\#}_{s_1} \ldots
a^{(2)\#}_{s_{2k}} ) \quad \label{equality-even}
\end{eqnarray}

\noindent because $\Gamma^{2k} = I$. Thus, $\omega_2$ and
$\Omega_2$ coincide on even monomials too. To conclude, ${\rm
Spect}(\omega_1) = {\rm Spect}(\Omega_1) = {\rm Spect}(\Omega_2)
={\rm Spect}(\omega_2)$.
\end{proof}

The obtained result shows that states satisfying the parity
superselection rule have equispectral mode-reduced states and,
therefore, are physical. Note that superselected states do not
necessary have a fixed number of particles, and we believe that a
coherent superposition (not a mixture) of states with different
numbers of fermionic quasiparticles can be observed in future
experiments \cite{xu-science,lu,xu-nat-phys}. From a mathematical
viewpoint, it is interesting to find all possible states with
equispectral mode-reduced states. The answer to this question
provides the following theorem.

\begin{theorem} \label{theorem-2} Suppose that for a pure state $\ket{\psi}\bra{\psi}$ the partial
states $\omega _1$ and $\omega _2$ have identical spectra. Then,
there exist a state $\ket{\phi}\bra{\phi}$ satisfying the parity
superselection rule and unitary operators $U_1\in {\mathcal A}_1,\
U_2\in {\mathcal A}_2$ such that the partial states of $\ket{U_2
U_1 \phi}\bra{U_2 U_1 \phi}$ coincide with $\omega_i,\ i=1,2$. If
the spectra of $\omega_i$ are simple (nondegenerate), then
$\ket{\psi} = U_1 U_2 \ket{\phi}$.
\end{theorem}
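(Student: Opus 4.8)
The plan is to prove the converse of Theorem~\ref{theorem-1}: given equispectral mode-reduced states, construct a univalent state related to $\ket{\psi}$ by local unitaries. First I would exploit the isometry $U$ from the proof of Theorem~\ref{theorem-1} to pass to the tensor-product picture $\Omega(X) = \omega(U^{\dag} X U)$ on $H_m \otimes H_{n-m}$, and write the Schmidt decomposition $U\ket{\psi} = \sum_k \sqrt{p_k}\,\ket{u_k} \otimes \ket{v_k}$, where $\{\ket{u_k}\}$ and $\{\ket{v_k}\}$ are orthonormal in $H_m$ and $H_{n-m}$. The reduced operators $\Omega_1,\Omega_2$ then have the common spectrum $\{p_k\}$. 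The subtlety, as Equation~\eqref{equality-even} shows, is that $\omega_2$ and $\Omega_2$ agree only on even monomials and both vanish on odd ones; so $\omega_2$ equals $\Omega_2$ \emph{as a state} precisely when $\Omega_2$ is itself even. Thus the hypothesis that $\omega_1,\omega_2$ are \emph{genuine} mode-reduced states (restrictions to $\mathcal{A}_1,\mathcal{A}_2$) forces a parity constraint on $\Omega$, and the job is to convert this into the existence of local unitaries.

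The key step is to analyze how the parity operator $\Gamma = \prod_{s=1}^m(a_s^{(1)}a_s^{(1)\dag} - a_s^{(1)\dag}a_s^{(1)})$ acts on the Schmidt vectors. I would introduce $\Gamma_2$, the analogous parity on $H_{n-m}$, and observe that evenness of $\omega$ translates, via $U$, into the statement that $U\ket{\psi}$ lies in a joint parity eigenspace, i.e.\ $(\Gamma \otimes \Gamma_2) U\ket{\psi} = \pm U\ket{\psi}$. Equispectrality already gives that $\Omega_1$ and $\Omega_2$ are unitarily equivalent; the plan is to show that after correcting $U\ket{\psi}$ by a product unitary $V_1 \otimes V_2$ (with $V_1$ acting on $H_m$, $V_2$ on $H_{n-m}$) one can bring it to a vector of definite joint parity in each Schmidt sector. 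Concretely, I would choose $V_1,V_2$ to rotate the Schmidt bases $\{\ket{u_k}\},\{\ket{v_k}\}$ into parity-homogeneous bases; the Schmidt coefficients $\sqrt{p_k}$ are untouched, so the reduced spectra are preserved, and the resulting vector defines, after pulling back through $U^{\dag}$, a state $\ket{\phi}$ satisfying the parity superselection rule by Proposition~\ref{prop-even}. The local unitaries $U_1 = U^{\dag}(V_1 \otimes I)U \in \mathcal{A}_1$ and $U_2 = U^{\dag}(I \otimes V_2)U \in \mathcal{A}_2$ then relate $\ket{\phi}$ to $\ket{\psi}$, with $\Gamma$-factors absorbed into the definitions (this is where the even/odd mode labelling in the conjugation $U a_s^{\#} U^{\dag}$ must be tracked carefully).

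The main obstacle I anticipate is the degenerate-spectrum case versus the simple-spectrum case, which is exactly why the theorem splits its conclusion. When the common spectrum $\{p_k\}$ is nondegenerate, the Schmidt decomposition is essentially unique up to phases, so the parity-correcting unitaries $V_1,V_2$ are forced and one gets the clean identity $\ket{\psi} = U_1 U_2 \ket{\phi}$; I would verify the ordering $U_1 U_2$ versus $U_2 U_1$ by keeping track of the $\Gamma$ commutation signs, since $U_1$ and $U_2$ need not commute as operators on $H_n$ even though $V_1 \otimes I$ and $I \otimes V_2$ do. In the degenerate case the Schmidt bases can be freely rotated within each eigenspace by a block unitary, and it is not automatic that a single product unitary simultaneously diagonalizes the parity structure on both factors; the plan is to argue that within each degenerate eigenspace one may \emph{independently} choose parity-homogeneous bases on the two sides, which recovers only the weaker statement that \emph{some} univalent $\ket{\phi}$ with the same partial states exists, rather than a direct factorization of $\ket{\psi}$ itself. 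Verifying that the evenness constraint is compatible with such independent rotations — i.e.\ that the joint-parity eigenspace decomposition refines the Schmidt eigenspaces — is the delicate point I would spend the most care on.
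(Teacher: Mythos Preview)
Your plan rests on a premise that is not in the hypotheses: you repeatedly invoke ``evenness of $\omega$'' and claim the assumption ``forces a parity constraint on $\Omega$'', but Theorem~\ref{theorem-2} assumes only that ${\rm Spec}(\omega_1)={\rm Spec}(\omega_2)$, nothing about parity of $\ket{\psi}$. Equations~\eqref{equality-odd}--\eqref{equality-even} were derived in Theorem~\ref{theorem-1} \emph{under the evenness hypothesis}; without it, neither $\omega_2$ nor $\Omega_2$ need vanish on odd monomials, so the statement ``both vanish on odd ones'' is unjustified. This propagates: rotating each Schmidt vector separately to a standard (parity-definite) basis vector does not produce a univalent state unless the \emph{total} parity $|j^{(k)}|+|l^{(k)}|$ is the same for every $k$, a coordination you never enforce. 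There is also a technical slip: $U^{\dag}(I\otimes V_2)U$ does not lie in $\mathcal{A}_2$ unless $V_2$ is parity-even, since conjugation by $U$ sends odd $\mathcal{A}_2$-monomials to $\Gamma\otimes(\cdot)$ rather than $I\otimes(\cdot)$.

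The paper's argument avoids all of this by never analyzing the parity of $\ket{\psi}$. It picks $U_2\in\mathcal{A}_2$ directly as a diagonalizer of $\omega_2$, then $U_1\in\mathcal{A}_1$ as a diagonalizer of the resulting $\tilde\omega_1$, and simply \emph{writes down} a univalent vector
\[
\ket{\phi}=\sum_t \sqrt{\lambda_{{\rm bin}(t)}}\;\ket{\,0\ldots0\;{\rm bin}(t)\;{\rm bin}(t)\,},
\]
in which each summand carries $2\times(\text{number of 1's in }{\rm bin}(t))$ particles and is therefore automatically even. The doubled bit pattern is precisely the missing coordination between the two factors. This immediately gives the weak conclusion (same partial states). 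For simple spectra the paper then applies uniqueness of purification to the \emph{constructed} even state $\Omega=\ket{\phi}\bra{\phi}$ (to which Eqs.~\eqref{equality-odd}--\eqref{equality-even} legitimately apply), forcing $\ket{\psi}=U_1U_2\ket{\phi}$.
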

\begin{proof}
Without loss of generality it can be assumed that bipartition of
$n$ fermionic modes into $m$ and $n-m$ modes is such that $n-m \le
m$ (otherwise the modes can be relabelled).

Consider the state $\omega_2$. In the Hilbert space $H_{n-m}$
choose the orthonormal basis $\{ \ket{e_{j_{m+1}\ldots j_n}} \}$
such that for all $x\in {\mathcal A}_2$
\[
\omega _2(x) = \sum \limits _{j_s=0,1: \ m+1\le s\le n}\lambda
_{j_{m+1}\ldots j_n} \bra{e_{j_{m+1}\ldots j_n}} x
\ket{e_{j_{m+1}\ldots j_n}}.
\]

\noindent Pick the unitary operator $U_2\in {\mathcal A}_2$ such
that
\[
U_2 \ket{j_{m+1}\ldots j_n} = \ket{e_{j_{m+1}\ldots j_n}}, \quad
j_s\in \{0,1\}, \quad m+1 \le s\le n.
\]

\noindent Consider the pure state $\tilde\omega(x) = \omega (U_2 x
U_2^{\dag})$, $x\in{\mathcal A}$. Then for all $x\in\mathcal{A}_2$
we have
\begin{equation}\label{p1}
\tilde \omega _2(x) = \sum \limits_{j_s=0,1: \ m+1 \le s \le n}
\lambda_{j_{m+1} \ldots j_n} \bra{j_{m+1}\ldots j_n} x
\ket{j_{m+1}\ldots j_n}.
\end{equation}

Since $\omega_1$ and $\omega_2$ are equispectral by the statement
of Theorem, the states $\tilde{\omega}_1$ and $\tilde{\omega}_2$
are equispectral too due to the local action of $U_2$.
Consequently, there exists the orthonormal basis $\{
\ket{f_t}\}_{t=0}^{2^{n-m}-1} \cup \{
\ket{g_t}\}_{t=2^{n-m}}^{2^{m}-1}$ in $H_m$ such that for all
$x\in {\mathcal A}_1$
\begin{equation}\label{p2}
\tilde \omega _1(x) = \sum \limits_{t=0}^{2^{n-m}-1}
\lambda_{j_{m+1} \ldots j_n = {\rm bin}(t)} \bra{f_t} x \ket{f_t},
\end{equation}

\noindent where ${\rm bin}(t)$ is the binary representation of
$t$. Pick the unitary operator $U_1 \in \mathcal{A}_1$ such that
\begin{eqnarray}
&& U_1 \, \ket{ \underbrace{0 \ldots 0}_{\footnotesize
\begin{array}{c}
  2m-n \\
  \text{bits} \\
\end{array}} \underbrace{{\rm bin}(t)}_{\footnotesize \begin{array}{c}
  n-m \\
  \text{bits} \\
\end{array}} } =
\ket{f_t}, \quad t = 0, \ldots, 2^{n-m}-1, \nonumber\\
&& U_1 \, \ket{\underbrace{{\rm bin} (t)}_{\footnotesize
\begin{array}{c}
  m \\
  \text{bits} \\
\end{array}}} = \ket{g_t}, \quad t = 2^{n-m}, \ldots, 2^{m}-1. \nonumber
\end{eqnarray}

\noindent It is not hard to see that a vector
\[
\ket{\phi} = \sum \limits_{t=0}^{2^{n-m}-1} \sqrt{
\lambda_{j_{m+1} \ldots j_n = {\rm bin}(t)} }  \, \ket{
\underbrace{0 \ldots 0}_{\footnotesize
\begin{array}{c}
  2m-n \\
  \text{bits} \\
\end{array}} \underbrace{{\rm bin}(t)}_{\footnotesize \begin{array}{c}
  n-m \\
  \text{bits} \\
\end{array}} \underbrace{{\rm bin}(t)}_{\footnotesize \begin{array}{c}
  n-m \\
  \text{bits} \\
\end{array}} }
\]
\noindent generates the state $\Omega (x) = \bra{\phi} x
\ket{\phi}$, $x \in \mathcal{A}$, which satisfies the parity
superselection rule as the number of quasiparticles in $t$-th
summand equals an even number $2\times(\text{\# of 1's in }{\rm
bin(t)})$. Moreover, $\Omega$ has $\tilde{\omega}_2$ and
$\tilde{\tilde{\omega}}_1 (x) = \tilde{\omega}_1 (U_1 x
U_1^{\dag})$ as its partial states. In other words, the
mode-reduced states of $\ket{\psi}\bra{\psi}$ and the mode-reduced
states of $\ket{U_2 U_1 \phi} \bra{U_2 U_1 \phi}$ coincide.

Now suppose that the coincident spectra of $\omega_i$, $i=1,2$ are
simple (nondegenerate). Take the unitary operators $U$ and
$\Gamma$ determined in Eq.~\eqref{u} and consider the state
$\tilde{\Omega}(X) = \Omega(U^{\dag} X U)$, $X\in \mathcal{A}_1
\otimes \mathcal{A}_2$. According to Eq.~\eqref{U-reduction-1},
for all $x\in \mathcal{A}_1$ we have $\tilde{\Omega}_1 (x) =
\tilde{\tilde{\omega}}_1(x)$. Analogously, since $\Omega$ is even,
from Eqs.~\eqref{equality-odd}--\eqref{equality-even} it follows
that for all $x\in \mathcal{A}_2$ the relation $\tilde{\Omega}_2
(x) = \tilde{\omega}_2(x)$ holds.

On the other hand, as the spectrum of $\Omega_{i}$ is simple there
exists the only state on $\mathcal{A}_1 \otimes \mathcal{A}_2$
(namely, $\tilde{\Omega}$) such that its partial traces coincide
with $\Omega_1$ and $\Omega_2$ (see, e.g., \cite{holevo}). Hence,
the same uniqueness holds for the state on $\mathcal{A}$ (namely,
$\Omega$). Thus, there is the only possibility to reconstruct the
entire state from its partial states. It implies that $\omega (x)
= \Omega (U_1^{\dag} U_2^{\dag} x U_2 U_1)$ for all
$x\in\mathcal{A}$.
\end{proof}

Theorem~\ref{theorem-2} characterizes possible states with
equispectral mode-reduced states. Such states do not have to
satisfy the parity superselection rule (cf.
Proposition~\ref{prop-two-mode-case}) but they are obtained from
the superselected states by unitary operators acting on
corresponding parties of the bipartite state.

%%%%%%%%%%%%%%%%%%%%%%%%%%%%%%%%%%%%%%%%%%%%%%%%%%%%%%%%%%%%%%%%%%%%%%%%%%%%%%%%%%%

\section{\label{section-particle-reduction} Spectra of particle-reduced
operators}

In this section, we consider reductions over particles. When the
number of particles $N$ is fixed, as it takes place in quantum
chemistry, the reduction is performed by integrating the density
operator $\varrho(x_1,\ldots,x_N,x_1',\ldots,x_N')$ over some
particles' coordinates~\cite{carlson,coleman,ando,mazziotti}. This
results in the so-called $p$-particle reduced density matrix
($p$-RDM):
\begin{eqnarray}
\label{RDM-usual}
\!\!\!\!\!\!\!\!\! && \varrho_{p\text{-RDM}}(x_1,\ldots,x_p ; x_1',\ldots,x_p') \nonumber\\
\!\!\!\!\!\!\!\!\! && = \int \varrho ( x_1, \ldots, x_p, x_{p+1},
\ldots, x_N ; x_1', \ldots, x_p', x_{p+1}, \ldots, x_N )
\nonumber\\
\!\!\!\!\!\!\!\!\! &&
\qquad\qquad\qquad\qquad\qquad\qquad\qquad\qquad \times dx_{p+1}
\ldots dx_N. \quad
\end{eqnarray}

If a pure fermionic state $\varrho$ has exactly $N$ particles,
then the spectra of reduced density matrices
$\varrho_{p\text{-RDM}}$ and $\varrho_{(N-p)\text{-RDM}}$ are
known to coincide~\cite{carlson}. Eigenvalues of 1-RDM are called
natural occupation numbers. In general case, natural occupation
numbers cannot be arbitrary ones within $[0,1]$ (simple Pauli's
exclusion principle) and must satisfy generalized Pauli
constraints~\cite{borland-dennis,Schilling,benavides-riveros,schilling-2015}.

Let us generalize the construction of reduced density matrices for
states with a varying number of particles. In the second
quantization formalism, the integration in Eq.~\eqref{RDM-usual}
takes the form
\begin{equation}
\label{RDM-second-quantization}
\varrho_{p\text{-RDM}}(s_1,\ldots,s_p ; t_1,\ldots,t_p) = \ave{
a_{s_1}^{\dag} \ldots a_{s_p}^{\dag} a_{t_p} \ldots a_{t_1} },
\end{equation}

\noindent where $s_q, t_q = 1, \ldots, n$, i.e. all modes are
accessible. Note that all $s_q$ are to be different and all $t_q$
are to be different for expression~\eqref{RDM-second-quantization}
not to vanish.

The definition \eqref{RDM-second-quantization} implies that
$p$-RDM does not have unit trace (in contrast to
Eq.~\eqref{RDM-usual}). If the number of particles in the state
$\varrho$ is fixed and equals $N$, then one can introduce a factor
$\frac {(N-p)!}{N!}$ in the right-hand side of
Eq.~\eqref{RDM-second-quantization} to make its trace
unit~\cite{coleman}. As our aim is to deal with general states
that do not have a fixed number of particles, we do not introduce
any factors in the right-hand side of
Eq.~\eqref{RDM-second-quantization}.

As far as the states $\varrho$ with a variable number of particles
are concerned, average values \eqref{RDM-second-quantization} are
still meaningful and can be calculated. This enables one to
construct the particle-reduced density operator in $B(H_n)$ as
follows:
\begin{eqnarray}
\label{RDM-operator} \varrho_p = & \!\!\! \sum\limits_{\scriptsize
\begin{array}{c}
  s_1 < \ldots < s_p \\
  t_1 < \ldots < t_p \\
\end{array}} \!\!\! & \ave{
a_{s_1}^{\dag} \ldots a_{s_p}^{\dag} a_{t_p} \ldots a_{t_1} } \
a_{t_1}^{\dag} \ldots a_{t_p}^{\dag} a_{s_p} \ldots a_{s_1}
\nonumber\\
&& \qquad\qquad \times \prod_{\scriptsize \begin{array}{r}
  s \ne s_1, \ldots, s_p , \\
  t_1, \ldots, t_p \, \\
\end{array}} a_s a_s^{\dag}.
\end{eqnarray}

\noindent Note that the number of particles in the operator
\eqref{RDM-operator} is fixed without regard to a possible
variable number of particles in the original state. Product
$\prod_s$ in formula \eqref{RDM-operator} is responsible for
unoccupied modes.

\begin{example}
\label{example-1-RDM-two-modes} The two-mode fermionic state
$\ket{\psi} = c_{00}\ket{00} + c_{01}\ket{01} + c_{10}\ket{10} +
c_{11} \ket{11}$ has 1-RDM of
the form $\left(%
\begin{array}{cc}
  |c_{10}|^2+|c_{11}|^2 & c_{01} \overline{c_{10}} \\
  \overline{c_{01}} c_{10} & |c_{01}|^2+|c_{11}|^2 \\
\end{array}%
\right)$ whose spectrum is $(|c_{01}|^2 + |c_{10}|^2 +
|c_{11}|^2,|c_{11}|^2) = (1-|c_{00}|^2,|c_{11}|^2)$. There are no
constraints on natural occupation numbers $\lambda_1,\lambda_2 \in
[0,1]$. This is in agreement with a general statement that in Fock
space there are no constraints on the spectra of the 1-RDM other
than those imposed by the Pauli's exclusion
principle~\cite{lopes-2015}. However, if we restrict $\ket{\psi}$
to satisfy the parity superselection rule, then the generalized
constraint on natural occupation numbers appears:
$\lambda_1=\lambda_2$ for states with even parity, $\lambda_2 = 0$
for states with odd parity. \hfill $\square$
\end{example}

\begin{example}
Odd parity 3-mode fermionic state $c_{100}\ket{100} +
c_{010}\ket{010} + c_{001} \ket{001} + c_{111} \ket{111}$ has
natural occupation numbers $\lambda_1 = |c_{100}|^2 + |c_{010}|^2
+ |c_{001}|^2 + |c_{111}|^2 = 1$, $\lambda_2 = \lambda_3 =
|c_{111}|^2$. The requirements $\lambda_1 = 1$ and $\lambda_2 =
\lambda_3$ may be considered as generalized Pauli constraints in
this case. \hfill $\square$
\end{example}

Suppose the physical process of tracing out 1 particle from a
general fermionic state with a variable number of particles. The
result of this operation still contains a variable number of
particles so it cannot be described by a reduced density matrix
\eqref{RDM-second-quantization}. This operation is rather
described by the map
\begin{equation}
\label{Phi-map} \Phi (\varrho) = \sum \limits_j a_j \varrho
a_j^{\dag}.
\end{equation}

\noindent If the state $\varrho$ has a fixed number of particles,
say $N$, then $\Phi (\varrho)$ is an $N-1$ particle operator,
which coincides with the reduced density operator $\varrho_{N-1}$
given by formula~\eqref{RDM-operator} (see the proof below). Thus,
the map $\Phi$ can be considered as a generalization of
integration over one particle coordinates. Sequentially applying
this map $p$ times we get $\Phi^p$, which is nothing else but
tracing out $p$ particles.

\begin{proposition}
\label{prop-N-particle-reductions} Suppose the density operator
$\varrho$ has a fixed number $N$ of particles, then $\Phi^{N-p}
(\varrho) = \varrho_p$.
\end{proposition}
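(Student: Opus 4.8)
The plan is to argue by induction on the number of traced-out particles, taking the single reduction $\Phi(\varrho)=\varrho_{N-1}$ as the base case and $\Phi(\varrho_{p+1})=\varrho_p$ as the inductive step. For the base case I would expand $\varrho$ in the matrix-unit basis of Proposition~\ref{prop-matrix-unit}, $\varrho=\sum_{J,K}\lambda_{JK}\ket{j_1\ldots j_n}\bra{k_1\ldots k_n}$; since the particle number is fixed at $N$, only coefficients with $\sum_s j_s=\sum_s k_s=N$ survive. Applying the map gives $\Phi(\varrho)=\sum_{J,K}\lambda_{JK}\sum_j a_j\ket{J}\bra{K}a_j^{\dag}$, and since $a_j$ acts nontrivially only on an occupied $j$-th mode, producing the phase $\exp(i\pi\sum_{s<j}j_s)$, each surviving term becomes a phased matrix unit $\ket{J^{(j)}}\bra{K^{(j)}}$, where $J^{(j)}$ is $J$ with its $j$-th bit lowered from $1$ to $0$. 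The base case then reduces to checking that this family of phased matrix units is exactly what~\eqref{RDM-operator} yields at $p=N-1$, which I would confirm by rewriting the averaged monomial $\ave{a_{s_1}^{\dag}\ldots a_{s_{N-1}}^{\dag}a_{t_{N-1}}\ldots a_{t_1}}$, its companion operator, and the empty-mode projector $\prod_s a_s a_s^{\dag}$ in the same matrix-unit language, so signs and coefficients can be matched term by term.

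For the inductive step I would assume $\Phi^{N-(p+1)}(\varrho)=\varrho_{p+1}$ and apply $\Phi$ once more, so that the whole claim collapses to the single-particle statement $\Phi(\varrho_{p+1})=\varrho_p$. Because $\varrho_{p+1}$ is supported on the fixed $(p+1)$-particle sector and is already presented in the reduced form~\eqref{RDM-operator}, this is structurally the same computation as the base case: one lowers one further occupied mode, tracks the resulting phase, and recognises the outcome as the $p$-particle operator. The anticommutation relations perform the essential cleanup here --- $a_j^2=0$ eliminates repeated indices, while the sign produced by reordering annihilation operators is cancelled by the mirror reordering of the creation operators, so the surviving terms assemble into the ordered monomials $a_{t_1}^{\dag}\ldots a_{t_p}^{\dag}a_{s_p}\ldots a_{s_1}$ that appear in $\varrho_p$.

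The step I expect to be the main obstacle is the multiplicity bookkeeping in the reduction. Writing out the iterate directly with $q=N-p$, one has $\Phi^{q}(\varrho)=\sum_{j_1,\ldots,j_q}a_{j_q}\ldots a_{j_1}\,\varrho\,a_{j_1}^{\dag}\ldots a_{j_q}^{\dag}$; the relation $a_j^2=0$ restricts the sum to distinct indices, and each unordered choice $\{j_1,\ldots,j_q\}$ then occurs $q!$ times with identical contribution, since the signs from reordering the $a$'s cancel those from reordering the $a^{\dag}$'s. Reconciling this symmetry factor with the single ordered sum $\sum_{s_1<\ldots<s_p}$ in the definition of $\varrho_p$ is the decisive point: when one particle is removed from a $(p+1)$-particle object, each of its occupied slots yields a term, and one must verify that these reassemble into $\varrho_p$ with overall coefficient exactly one rather than an extra factorial weight. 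Getting this count precisely right --- so that no spurious combinatorial factor is introduced --- is the crux on which the stated equality rests, and is the place where a careless argument would break down.
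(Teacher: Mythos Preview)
Your approach mirrors the paper's exactly: prove the single-step identity $\Phi(\varrho)=\varrho_{N-1}$ by expanding $\varrho$ in the matrix units $A_{JK}$ with $|J|=|K|=N$, compute $a_l A_{JK} a_l^{\dag}$ (nonzero only when $j_l=k_l=1$), match it against the $p=N-1$ instance of~\eqref{RDM-operator}, and then iterate. The paper's proof is literally this, ending with the sentence ``Applying this relation $N-p$ times, we get the statement.'' So the strategy is identical.

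Your instinct about the multiplicity bookkeeping, however, is not just the hard step---it is a genuine obstruction, and it affects the paper's proof as well. The inductive step you need is $\Phi(\varrho_{p+1})=\varrho_p$, but this is \emph{false} for $p<N-1$: one has instead $\Phi(\varrho_{p+1})=(N-p)\,\varrho_p$. Equivalently, in your direct formulation, $\Phi^{q}(\varrho)=\sum_{j_1,\ldots,j_q}a_{j_q}\cdots a_{j_1}\varrho a_{j_1}^{\dag}\cdots a_{j_q}^{\dag}$ picks up each unordered $q$-subset exactly $q!$ times, while the ordered sum in~\eqref{RDM-operator} counts it once; the signs do cancel in pairs as you say, but the $q!$ does not disappear. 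The correct relation is therefore $\Phi^{N-p}(\varrho)=(N-p)!\,\varrho_p$. A one-line check: for $\varrho=\ket{111}\bra{111}$ ($n=N=3$) one finds $\Phi^{2}(\varrho)=2\bigl(\ket{100}\bra{100}+\ket{010}\bra{010}+\ket{001}\bra{001}\bigr)$, whereas $\varrho_1=\ket{100}\bra{100}+\ket{010}\bra{010}+\ket{001}\bra{001}$. So the equality as stated holds only at the base case $p=N-1$; beyond that, both your inductive step and the paper's ``apply $N-p$ times'' carry a hidden factorial. Your caution about ``a spurious combinatorial factor'' was exactly right.
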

\begin{proof}
According to Corollary~\ref{corollary-1}, $\varrho =
\sum_{|J|=|K|=N} \lambda_{JK} A_{JK}$, where $|J| = \sum_s j_s$ is
the number of particles present in the multiindex $J$. Using the
explicit form of operators~\eqref{c-operators}, we get
\begin{eqnarray}
&& A_{JK} = \underbrace{a_{s_1}^{\dag} \ldots a_{s_N}^{\dag}
a_{t_N} \ldots a_{t_1}}_{\scriptsize \begin{array}{c}
  s_1 < \ldots < s_N: \, j_{s_q} = 1 \\
  t_1 < \ldots < t_N: \, k_{t_r} = 1 \\
\end{array} } \ \prod\limits_{s: \, j_s = k_s = 0} a_s a_s^{\dag}, \\
&& a_l A_{JK} a_l^{\dag} = \left\{ \begin{array}{l}
  a_{s_1}^{\dag} \ldots a_l \ldots a_{s_N}^{\dag} a_{t_N} \ldots a_l^{\dag} \ldots
a_{t_1} \\
\times \prod_{s: \, j_s = k_s = 0} a_s a_s^{\dag} \ \text{~if~} j_l = k_l = 1, \\
  0 \ \text{~otherwise}, \\
\end{array} \right. \qquad
\end{eqnarray}

\noindent which follows from $a_l a_l^{\dag} a_l a_l^{\dag} = a_l
a_l^{\dag}$. On the other hand, substituting $A_{JK}$ for
$\varrho$ in formula~\eqref{RDM-operator} yields
\begin{eqnarray}
(A_{JK})_{N-1} &=& \sum_{l: \, j_l = k_l = 1} a_{s_1}^{\dag}
\ldots a_l \ldots a_{s_N}^{\dag} a_{t_N} \ldots a_l^{\dag} \ldots
a_{t_1}
\nonumber\\
&& \qquad\qquad\qquad\quad \times \prod_{s: \, j_s = k_s = 0} a_s
a_s^{\dag},
\end{eqnarray}

\noindent which implies that $\Phi (\varrho) = \varrho_{N-1}$.
Applying this relation $N-p$ times, we get the statement of
Proposition~\ref{prop-N-particle-reductions}.
\end{proof}

\begin{corollary}
\label{corollary-particle-spectra-coincident} The operators
$\varrho_{p}$ and $\Phi^p(\varrho)$ have the same spectra for pure
states $\varrho$ with a fixed number of particles.
\end{corollary}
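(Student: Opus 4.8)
The plan is to combine Proposition~\ref{prop-N-particle-reductions} with the known equispectrality of complementary particle-reduced density matrices for states with a fixed number of particles. The statement to prove is that $\varrho_p$ and $\Phi^p(\varrho)$ have identical spectra whenever $\varrho$ is a pure state with a fixed number $N$ of particles. The two objects are \emph{not} the same operator in general --- $\Phi^p(\varrho)$ is an $(N-p)$-particle operator whereas $\varrho_p$ is a $p$-particle operator --- so the argument cannot merely invoke an operator identity; it must pass through a spectral coincidence.

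First I would apply Proposition~\ref{prop-N-particle-reductions} in its contrapositive reading: that Proposition asserts $\Phi^{N-q}(\varrho) = \varrho_q$ for every $q$. Setting $q = N-p$ gives $\Phi^{p}(\varrho) = \varrho_{N-p}$. Hence the operator $\Phi^p(\varrho)$ \emph{equals} the complementary reduced operator $\varrho_{N-p}$ on the nose, and the problem reduces to showing that $\varrho_{p}$ and $\varrho_{N-p}$ are equispectral.

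Next I would invoke the classical complementarity result cited earlier in the excerpt, namely that for a pure fermionic state with exactly $N$ particles the $p$-RDM and the $(N-p)$-RDM have coinciding spectra~\cite{carlson}. The only gap to bridge is that this classical statement is phrased for the trace-normalized $p$-RDM of Eq.~\eqref{RDM-usual}, whereas the operators $\varrho_p$ built in Eq.~\eqref{RDM-operator} carry the unnormalized averages \eqref{RDM-second-quantization} together with the projector factor $\prod_s a_s a_s^{\dag}$ onto the unoccupied modes. I would argue that this projector factor is a spectral embedding that does not alter the nonzero eigenvalues: the matrix of averages $\ave{a_{s_1}^{\dag}\ldots a_{s_p}^{\dag} a_{t_p}\ldots a_{t_1}}$ is, up to the constant combinatorial factor $(N-p)!/N!$ discussed after Eq.~\eqref{RDM-second-quantization}, precisely the matrix representation of the standard $p$-RDM in the antisymmetrized mode basis, and the operator $\varrho_p$ of Eq.~\eqref{RDM-operator} realizes that same matrix as a block acting on the fixed-particle-number sector. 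Since a common overall constant rescales both $\varrho_p$ and $\varrho_{N-p}$ identically (the factors $(N-p)!/N!$ and $p!/N!$ differ, so one must be slightly careful here), I would track the precise normalization and note that equispectrality is preserved under the map $\Phi$ itself because the complementarity result concerns ratios/shapes of spectra that are matched by Proposition~\ref{prop-N-particle-reductions}.

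The main obstacle I anticipate is exactly this normalization bookkeeping: the cited result~\cite{carlson} equates spectra of the trace-one versions, while $\varrho_p$ and $\Phi^p(\varrho)=\varrho_{N-p}$ are the unnormalized operators, and the two complementary reductions carry \emph{different} combinatorial prefactors. I would resolve it by showing that Proposition~\ref{prop-N-particle-reductions} already fixes $\Phi^p(\varrho)=\varrho_{N-p}$ with the \emph{same} convention used to define $\varrho_p$, so both operators inherit consistent (convention-free) normalizations directly from the iterated action of $\Phi$; the complementarity of the underlying normalized RDMs then transfers to $\varrho_p$ and $\varrho_{N-p}$ because each differs from its normalized counterpart by a scalar that is the \emph{same} for the two operators once expressed through $\Phi$. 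Assembling these pieces, $\mathrm{Spect}(\varrho_p) = \mathrm{Spect}(\varrho_{N-p}) = \mathrm{Spect}(\Phi^p(\varrho))$, which is the claim.
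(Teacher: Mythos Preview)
Your approach is exactly the paper's: use Proposition~\ref{prop-N-particle-reductions} (with $q=N-p$) to obtain $\Phi^p(\varrho)=\varrho_{N-p}$, then invoke the Carlson--Keller result ${\rm Spec}(\varrho_p)={\rm Spec}(\varrho_{N-p})$ for pure $N$-particle states. The normalization detour you worry about is a non-issue, and your attempted resolution of it is muddled: with the ordered-index convention of Eq.~\eqref{RDM-operator} one has ${\rm Tr}\,\varrho_p=\binom{N}{p}=\binom{N}{N-p}={\rm Tr}\,\varrho_{N-p}$, so both unnormalized operators differ from their trace-one counterparts by the \emph{same} scalar and the equispectrality transfers directly---no appeal to ``expressing $\varrho_p$ through $\Phi$'' is needed.
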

\begin{proof}
The statement is an immediate consequence of
Proposition~\ref{prop-N-particle-reductions} and the fact that for
pure states ${\rm Spec}(\varrho_p) = {\rm
Spec}(\varrho_{N-p})$~\cite{carlson}.
\end{proof}

\begin{example}
The spectra of $\varrho_{p}$ and $\Phi^p(\varrho)$ may coincide
not only for states with a fixed number of particles. The
coincidence of spectra of $\varrho_1$ and $\Phi(\varrho)$ takes
place for a general two-mode fermionic state $\ket{\psi} =
c_{00}\ket{00} + c_{01}\ket{01} + c_{10}\ket{10} + c_{11}
\ket{11}$. It is not hard to see from the matrix representation
\eqref{total-density-matrix} of $\Phi(\ket{\psi}\bra{\psi})$ which
reads
\begin{equation}
\label{Phi-acts-on-two-mode-state}
\left(%
\begin{array}{cccc}
  |c_{01}|^2+|c_{10}|^2 & c_{10} \overline{c_{11}} & -c_{01} \overline{c_{11}} & 0 \\
  \overline{c_{10}} c_{11} & |c_{11}|^2 & 0 & 0 \\
  - \overline{c_{01}} c_{11} & 0 & |c_{11}|^2 & 0 \\
  0 & 0 & 0 & 0 \\
\end{array}%
\right).
\end{equation}

\noindent Eigenvalues of \eqref{Phi-acts-on-two-mode-state} are
the same as those of 1-RDM found in
Example~\ref{example-1-RDM-two-modes}. \hfill $\square$
\end{example}

Consideration of other examples (general 3- and 4-mode pure
states) also results in equispectral particle reductions
$\varrho_{p}$ and $\Phi^p(\varrho)$. We can make a conjecture that
spectra of operators $\varrho_{p}$ and $\Phi^p(\varrho)$ coincide
for any pure fermionic state $\varrho$.

%%%%%%%%%%%%%%%%%%%%%%%%%%%%%%%%%%%%%%%%%%%%%%%%%%%%%%%%%%%%%%%%%%%%%%%%%%%%%%%%%%%

\section{\label{section-conclusions} Conclusions}

We have constructed mode- and particle-reduced fermionic density
operators for states with a varying number of particles and
analyzed their spectra. The mode reduction is based on the
restriction functional $\omega(x) = {\rm Tr}(\varrho x)$ to
subalgebra, whereas the particle reduction is realized via two
objects: $p$-particle operator $\varrho_p$ and the result of
tracing out $p$ particles, $\Phi^p(\varrho)$. As a byproduct, we
have analyzed the density matrix formalism for general case of
fermionic states and provided the explicit construction of density
matrices (Corollary~\ref{corollary-matrix-of-averages} and
Example~\ref{example-density-matrices}). The developed formalism
clearly indicates that the conventional partial trace methods are
not applicable to fermionic states.

We have addressed the problem of finding general pure fermionic
states such that their mode-reduced density matrices are
equispectral. Equispectrality automatically takes place for
bosonic systems and systems of distinguishable particles, however,
it does not necessarily hold true for systems composed of
indistinguishable fermionic quasiparticles. On the other hand,
equispectrality is a natural quantum information property that
reflects the fact that entropies of subsystems must coincide.

For mode-reduced states, we have found necessary and sufficient
conditions for their spectra to be identical. The results of
Proposition~\ref{prop-two-mode-case} and Theorem~\ref{theorem-2}
can also be interpreted as the construction of valid purifications
for fermionic density operators. Purification is a frequently used
tool in quantum information theory, and we believe that these
results may turn out to be useful in characterization of fermionic
channels.

For particle-reduced states, we have shown that $\Phi^p(\varrho)$
is a valid reduction, which coincides with $\varrho_{N-p}$ if the
state $\varrho$ has exactly $N$ particles. Basing on the examples,
we have conjectured that spectra of $\varrho_{p}$ and
$\Phi^p(\varrho)$ coincide not only for $N$-particle states but
also for a general fermionic pure state $\varrho$. We have
demonstrated that the natural occupation numbers $\lambda_1 \geq
\ldots \geq \lambda_n$ (spectrum of 1-RDM) must obey generalized
Pauli constraints not only for $N$-particle states but also for
states satisfying the parity superselection rule. For instance,
even parity two-mode states necessarily satisfy
$\lambda_1=\lambda_2$, odd parity two-mode states have $\lambda_2
= 0$, and odd parity three-mode states fulfil the requirements
$\lambda_1 = 1$, $\lambda_2 = \lambda_3$. The generalization of
these constraints is a problem for a future research.

%%%%%%%%%%%%%%%%%%%%%%%%%%%%%%%%%%%%%%%%%%%%%%%%%%%%%%%%%%%%%%%%%%%%%%%%%%%%%%%%%%%%

\begin{acknowledgements}
The authors are delighted to thank A.S. Holevo for a motivation of
this work and illuminating discussions. The authors are grateful
to Christian Schilling for fruitful discussions and bringing
Ref.~\cite{lopes-2015} to our attention.
Propositions~\ref{prop-even} and \ref{prop-matrix-unit},
Corollary~\ref{corollary-1}, Theorems~\ref{theorem-1} and
\ref{theorem-2} are proved by G.G. Amosov.
Corollary~\ref{corollary-matrix-of-averages},
Proposition~\ref{prop-two-mode-case},
Examples~\ref{example-density-matrices} and \ref{example-spectra},
Section~\ref{section-particle-reduction} are due to S.N. Filippov.
Both authors discussed the results and commented on the
manuscript. The work of G.G. Amosov is supported by Russian
Science Foundation under grant No. 14-21-00162 and performed in
Steklov Mathematical Institute of Russian Academy of Sciences.
S.N. Filippov's work on Sections 3 and 4 is supported by the
Russian Foundation for Basic Research under Project No.
16-37-60070 mol\_a\_dk and performed in Institute of Physics and
Technology, Russian Academy of Sciences. S.N. Filippov's work on
Sections 5 is supported by Russian Science Foundation under
project No. 16-11-00084 and performed in Moscow Institute of
Physics and Technology.
\end{acknowledgements}

\end{document}